\newcommand{\Hb}{{\bf H}}
\newcommand{\ww}{{\bf w}}
\newcommand{\T}{\mathsf{T}}
\renewcommand{\b}{{ b}}
\newcommand{\bb}{{\bf b}}
\renewcommand{\aa}{{\bf a}}
\renewcommand{\u}{{  u}}
\newcommand{\uu}{{\bf  u}}
\newcommand{\X}{{\bf X}}
\newcommand{\Z}{{\bf Z}}
\newcommand{\yy}{{\bf y}}
\newcommand{\y}{{ y}}
\renewcommand{\a}{a}
\newcommand{\eg}{\textit{e.g.,~}}
\newcommand{\ie}{\textit{i.e.,~}}
\def\subjto{{\mbox{subj. to}}}
 \newtheorem{thm}{Theorem}
\newtheorem{df}{Definition}
\newtheorem{cor}[thm]{Corollary}
\title{\LARGE \bf
Blind Identification of ARX Models with Piecewise Constant Inputs
}
\author{Henrik Ohlsson, Lillian Ratliff, Roy Dong and S. Shankar Sastry% <-this % stops a space
\thanks{The work presented is supported by the NSF
Graduate Research Fellowship under grant DGE 1106400, NSF
CPS:Large:ActionWebs award number 0931843, TRUST (Team for Research in
Ubiquitous Secure Technology) which receives support from NSF (award
number CCF-0424422), and FORCES (Foundations Of Resilient
CybEr-physical Systems), the Swedish Research
  Council in the Linnaeus center CADICS, the European Research Council
   under the advanced grant LEARN, contract 267381, a postdoctoral grant from the Sweden-America
   Foundation, donated by ASEA's Fellowship Fund, and  by a postdoctoral
   grant from the Swedish Research Council.}% <-this % stops a space
\thanks{Ohlsson, Ratliff, Dong, and Sastry are with the Department of Electrical Engineering and Computer  Sciences, University of California, Berkeley, CA, USA. Ohlsson is also with the
Division of Automatic Control, Department of Electrical Engineering, Link\"oping University, Sweden.
{\tt\small  ohlsson@eecs.berkeley.edu.}}}%
\begin{document}

\maketitle
\thispagestyle{empty}
\pagestyle{empty}

%%%%%%%%%%%%%%%%%%%%%%%%%%%%%%%%%%%%%%%%%%%%%%%%%%%%%%%%%%%%%%%%%%%%%%%%%%%%%%%%
\begin{abstract}

Blind system identification  is known to be a hard ill-posed problem and
without further assumptions, no unique solution is at hand. In this contribution, we are concerned with the task of identifying an ARX model
from only output measurements. Driven by the task of
identifying systems that are turned on and off at unknown times, we
seek a piecewise constant input and a corresponding ARX model
% that feed with these inputs, 
which approximates the measured outputs. 
We phrase this as a
rank minimization problem and present a relaxed convex formulation to approximate
its solution. 
The proposed method was developed to model power consumption of electrical
appliances  and is now a part of
a bigger energy disaggregation framework. Code will be made available online.

% It is well known that at least
%as many measurements are needed as the number of parameters for a
%unique solution. But in 
%blind ARX identification, we are trying to identify in principal both the
%system input and the ARX parameters
%simultaneously. Since  the input alone
%has as many unknowns as the number of measurements, this seems like an
%impossible task and in fact, not knowing more, this the task is impossible.

%In this paper, we assume we know a bit more. We assume that the sought
%input is piecewise constant. This assumption may seem artificial.  
\end{abstract}

%%%%%%%%%%%%%%%%%%%%%%%%%%%%%%%%%%%%%%%%%%%%%%%%%%%%%%%%%%%%%%%%%%%%%%%%%%%%%%%%
\section{Introduction}
% Consider the linear regression model
% \begin{equation}\label{eq:LR}
% y(t) = \bvarphi^\T(t) \btheta \in \Re, \quad \btheta \in \Re^n.
% \end{equation}
% Estimation of this type of model is probably the most common task in
% system identification and a very well studied problem, see for instance \cite{Ljung:99}. It is
% well know that ARX-models 

Consider an \textit{auto-regressive exogenous input} (ARX) model 
\begin{align}\nonumber
y(t)-&\a_1 y(t-1)  %+\a_2 y(t-2) 
- \cdots - \a_{n_a} y(t-n_a) \\=&
\b_1 u(t-n_k - 1) %+\b_2 u(t-n_k - 2)
+\cdots +\b_{n_b} u(t-n_k - n_b) 
%y(t)=&
%\sum_{k=1}^n \g_k u(t - k) 
\end{align} with input $u \in \Re$ and output $y \in \Re$.
Estimation of this type of model is probably the most common task in
system identification and a very well studied problem, see for instance \cite{Ljung:99}. 
The common
setting is that $\{(y(t),u(t))\}_{t=1}^N$  is given and the summed
residuals{\small
\begin{equation*}
\sum_{t=n}^N \left (y(t) - \sum_{k_1=1}^{n_b}
\b_{k_1} u(t-k_1-n_k) - \sum_{k_2=1}^{n_a}
\a_{k_2} y(t-k_2) \right )^2
\end{equation*}}
where $n=\max( n_a, n_k+n_b)+1$, is minimized to obtain an estimate for $\a_1,\dots, \a_{n_a},\b_1,\dots, \b_{n_b}$. This estimate is often
referred to as the \textit{least squares} (LS) estimate.
% It is well
%know that the \textit{finite impulse response} (FIR) model is a
%contained in the class of ARX modells and lately it has 
%been observed that clever regularizations can improve the estimates
%for FIR models,
%see for instance \cite{Pillonetto:10b,Pillonetto:10a,Ohlssonetal:11k}.

In this paper we study the more complicated problem of
estimating an ARX model from solely outputs $\{y(t)\}_{t=1}^N$. This is
an ill-posed problem and it is easy to see that under no further
assumptions, it would be impossible to uniquely determine  $\a_1,\dots, \a_{n_a},\b_1,\dots, \b_{n_b}$.

We will in this contribution study this problem under the assumption
that the input is piecewise constant. This is a rather natural
assumption and a problem faced in many identification
problems. Consider \eg the  modeling of an
electrical appliance where the power consumption is monitored while
the appliance is turned on and off. The exact time for when the
appliance was turned on and off is not known and neither is the
amplitude of the ``input''.  

It should be noticed that the assumption of a piecewise constant input
is not enough to uniquely determine the input or the ARX
model. Specifically, we will not be able to decide the input or the
ARX coefficients $\b_1,\dots, \b_{n_b}$  more than up to a multiplicative scalar. However, for
many applications this is sufficient, as we will illustrate in the numerical section. 

The task of identifying a model from only outputs is in system
identification referred to as \textit{blind system identification} (BSI). It is known to be
a difficult problem and in general ill-posed. 

  %%%%%%%%%%%%%%%%%%%%%%%%%%%%%%%%%%%%%%%%%%%%%%%%%%%%%%%%%%%%%%%%%%%%%%%%%%%%%%%%
\section{Background}
\label{sec:background}
Our work is motivated by blind system identification  which is a fundamental signal processing tool used for identifying a system using only observations of the systems output. Formally, given the output signal of a system, BSI serves as a tool for estimating the unknown inputs and system model~\cite{abed1997:jk}. 
\tikzstyle{int}=[draw, minimum size=3em]
\tikzstyle{init} = [pin edge={to-,thin,black}]
\tikzstyle{sum} = [draw, circle, node distance=2cm]
\begin{figure}[ht]
  \begin{center}
    \begin{tikzpicture}[node distance=2.5cm,auto,>=latex']
    \node [int] (a) {$H$};
    \node [sum, pin={[init]above:$w$}, right of=a] (sum) {};
    \node (b) [left of=a,node distance=2cm, coordinate] {a};
      \node [coordinate] (c) [right of=a, node distance=2cm]{};
      \node [coordinate] (end) [right of=sum, node distance=2cm]{};
    \path[->] (b) edge node {$u$} (a);
    \path[->] (a) edge node {} (sum);
    \path[->] (sum) edge node {$y$} (end);
\end{tikzpicture}
      \end{center}
  \caption{Input-Output Model of a System}
  \label{fig:IOmode}
\end{figure}
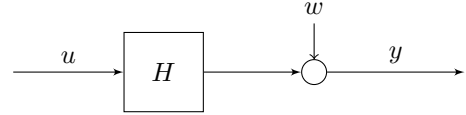

Consider the block diagram in Figure~\ref{fig:IOmode}. Suppose that the system we want to identify is linear. Given only $y$, BSI is used to identify the input $u$ and the system transfer function $H$. In this way, BSI is a method for solving the inverse problem of system identification without input information. Consider now a discrete linear, time-invariant system. Note that we describe the theory in this section for discrete time systems but the continuous time counterpart can be derived in a similar fashion. The output can be written as a convolution model, \ie 
\begin{equation}
  y(t)=u(t)\star h(t)+w(t) %=\sum_{j=-\infty}^{\infty} h_ju_{k-j} +w_k
  \label{eq:conv}
\end{equation}
where $\star$ is the convolution operator and $w$ is a noise term. This problem can be transfered to the frequency domain by applying the Fourier transform to get the following system:
\begin{equation}
  Y(s)=H(s) U(s)+W(s)
  \label{eq:Freq}
\end{equation}
An alternative name for BSI when the system is linear, time-invariant is blind deconvolution. For a known input $u(k)$, a deconvolution process can be applied to $y_i(k)$ to approximate $h_i(k)$. For instance, using a pseudo-inverse filter which is an approximation of the Weiner filter, the result of deconvolution gives
\begin{equation}
  U(s)\approx \frac{Y_i(s)H^\dagger(s)}{|H(s)|^2+C}
  \label{eq:weiner}
\end{equation}
where $(\cdot)^\dagger$ denotes the pseudo-inverse and $C$ is a constant chosen based on heuristics and serves to prevent amplification of noise~\cite{caron:2004:ji}. However, we are interested in solving the problem with unknown inputs. When the input is unknown, usually partial information about the statistical properties of $\{u(t)\}$ is required in order to obtain a good approximation of the output $\{y(t)\}$. Further, how the partial information is used in the identification problem plays an important role in the quality of the solution~\cite{li1993:hy}.

Typically, system identification requires information on the input and the output of a system in order for the problem to be well-posed and to reconstruct the system itself. However, in many applications, \eg data communications, speech recognition, image restoration and seismic signal processing, this information is not readily available. Broadly speaking, in all these application areas we can describe the identification problem using the following abstract formulation.

Suppose there is a signal that is transmitted through a `channel' that can be described using a linear, time-invariant model with a single input and $p$ outputs. The input to the system $\{u(t)\}$ results in $N$ of output sequences $\{y_1(t)\}, \ldots, \{y_p(t)\}$. Let $\{h_1(t)\}, \ldots, \{h_p(t)\}$ denote the \textit{finite impulse responses} (FIR's) which are of order $K$. As we noted above, a linear, time-invariant system of this type can be described using the convolution model in Equation~\eqref{eq:conv} for each $i\in \{1, \ldots, p\}$, \ie
\begin{equation}
  y_i(t)=u(t)\star h_i(t)+w_i(t).
  \label{eq:ideconv}
\end{equation}
This model can be concisely as
\begin{equation}
  \yy=\Hb \uu+\ww
  \label{eq:concise}
\end{equation}
where 
\begin{equation}
  \yy:=\begin{bmatrix} \yy_1^\T& \cdots& \yy_p^\T\end{bmatrix}^\T
  \label{eq:ybold}
\end{equation}
with each $\yy_i=[y_i(1)\; \cdots \; y_i(N)]^\T$, and
\begin{equation}
  \uu=\begin{bmatrix} u(-K)&u(-K+1)&\cdots & u(N-1)\end{bmatrix}^\T.
  \label{eq:uuconv}
\end{equation}
The matrix $\Hb$ takes the form
\begin{equation}
  \Hb:=\begin{bmatrix} \Hb_1^\T & \cdots & \Hb_p^\T \end{bmatrix}^\T
  \label{eq:Hb}
\end{equation}
where each $\Hb_i$ is an $N\times (N+K)$ filtering matrix given by
\begin{equation}
  \Hb_i=\begin{bmatrix} h_i(K)& \cdots& h_i(0) & \cdots & 0\\
    %0 & h_i(K) & \cdots & h_i(0) & \cdots & 0\\
    \vdots & \ddots & & \ddots & \vdots\\
   0 & \cdots & h_i(K) & \cdots & h_i(0)
  \end{bmatrix}.
  \label{eq:filtermatrix}
\end{equation}
Now that the system has been written in this form, we can formulate
the BSI or blind deconvolution problem and ask when it has a
well-defined solution. If a system identification problem is
well-posed, then all the unknown parameters can be uniquely determined
given the data. Given $\yy$ and $\ww\equiv 0$, then we can only hope
to solve the system in Equation \eqref{eq:concise} for unique $\uu$
and $\Hb$ up to a scalar~\cite{abed1997:jk}. In this case we call the
system identifiable. Necessary and sufficient conditions for
identifiability are given in~\cite{hua1996:fa} and summarized in
\cite{abed1997:jk}.

There are a number of methods for estimating either the input $\uu$ or
the system function $\Hb$. Once either the input or the system matrix
has been estimated, the other can be calculated using the
estimate. The application typically determines whether a direct
estimation of input or system matrix should be done. For instance, in
communication applications the input carries the \emph{information}
and as such direct estimation should be used for the input and the
system matrix should be calculated after. The input $\uu$ can be
estimated using the following methods: \textit{input subspace} (IS) method,
\textit{mutually referenced equalizers} (MRE), or \textit{linear prediction} (LP) method
(see~\cite{abed1997:jk, gesbert1997:lk, Makhoul:1975:lj}). The system
matrix can be directly estimated using the \textit{maximum likelihood} (ML)
method (see, for instance,~\cite{Talwar:1994:kl}) and the subspace
method (see \cite{abed:1997:fi}). We remark that in the above
formulation we have considered only FIR models. These tend to be
sufficient in practice considering that infinite impulse responses can
be approximated by FIR's and modeling with FIR's results in problem
formulations that have tractable solutions.

In this paper we are concerned specifically with estimating an ARX model from only output observations and we formulate the problem using the BSI framework.
%%%%%%%%%%%%%%%%%%%%%%%%%%%%%%%%%%%%%%%%%%%%%%%%%%%%%%%%%%%%%%%%%%%%%%%%%%%%%%%%
\section{Problem Formulation}
Given $\{y(t)\}_{t=1}^N \in \Re$ and a bound for the noise $\epsilon$, find an estimate for  $\a_1,\dots,
\a_{n_a},\b_1,\dots, \b_{n_b} \in \Re$ and an over time piecewise constant  $ 
u(t) \in \Re,t=1,\dots,N,$ such that  
\begin{align*}\nonumber
y(t)-&\a_1 y(t-1) 
- \cdots - \a_{n_a} y(t-n_a) \\=&
\b_1 u(t-n_k - 1) 
+\cdots +\b_{n_b} u(t-n_k - n_b)+w(t),
\end{align*} for  $t=n,\dots,N$, where $n=\max( n_a, n_k+n_b)+1$, and
\begin{equation}
 |w(t)| \leq \epsilon,\quad t=n,\dots,N.
\end{equation}
We will for simplicity assume that $n_a, n_b, n_k,$ are
known. To make the problem well posed, we will seek the piecewise
constant input with the least amount of changes. Other choices have
been studied for the related problem of blind deconvolution, see
\cite{Ahmed:12} for a solution where the signals to be recovered are
assumed to be in some known subspaces.

%%%%%%%%%%%%%%%%%%%%%%%%%%%%%%%%%%%%%%%%%%%%%%%%%%%%%%%%%%%%%%%%%%%%%%%%%%%%%%%%
\section{Notation and Assumptions}

We will use $y$ to denote the output and $u$ the input. We will for
simplicity only consider \textit{single input single output} (SISO)
systems. We will assume
that $N$ measurements of $y$ are available and stack them in the vector
$\yy$, \ie \begin{align} \yy=\begin{bmatrix} \y(1) &\dots &
   \y(N)  \end{bmatrix}^\T.\end{align}  We also introduce $\uu$, $\ww$, $\aa$
and $\bb$
 as
\begin{align} %\\
\uu=&\begin{bmatrix} u(1) &\dots &
    u(N)  \end{bmatrix}^\T,\\
 \ww=&\begin{bmatrix} w(1) &\dots &
   w(N) \end{bmatrix}^\T,\\ 
\aa=&\begin{bmatrix} \a_1 &\dots &
    \a_{n_a}  \end{bmatrix}^\T,\\
\bb=&\begin{bmatrix} \b_1 &\dots &
    \b_{n_a}  \end{bmatrix}^\T.\end{align} We will use $\yy(i)$ to
denote the $i$th element of $\yy$. To pick out a subvector of $\yy$
consisting of the $i$th to the $j$th element we will use the notation
$\yy(i:j)$ and similarly for picking out a subvector of $\uu$, $\aa$
and $\bb$. To pick out a submatrix consisting of the $i$th to the
$j$th rows of $\X$ we use the notation $\X(i:j,:)$.

We will use normal font to represent scalars and bold for vectors and
matrices. 
%$|\cdot|$ represents the absolute value for scalars, vectors
%and matrices and returns the number of elements of a set if the
%argument is a set.  
 $\|\cdot \|_0$ is the zero norm which returns the
number of nonzero elements of its argument and $\|\cdot\|_p$ the
$p$-norm defined as $\|{\yy}\|_p \triangleq  \sqrt[p]{\sum_i | {\yy}(i)|^p
}$.  
%For
%matrixes,  $\|\cdot\|_0$ returns the
%number of nonzero elements and $\|\cdot\|_p$ is defined as
%$ {\X} \triangleq  \sqrt[p]{\sum_{i,j}|  \X(i,j)|^p
%}$, where $\X(i,j)$ picks out the $i,j$th element of
%$\X$.  
$\|\X\|_{i,j}$ is used to denote the combination of the
$i$-norm with the $j$-norm. The $i$-norm is applied to each 
row of $\X$ and  the $j$-norm on the resulting vector. 
 %$\X^*$ is denoting the complex conjugate
%transpose of $\X$. 
%We let $\mathds{1}_{m \times n}$ denote a $m
%\times n$ matrix of ones, $I$ the identity matrix, $\Re$ the set of the real numbers, 
%and $\mathbb{Z}$ be the set of integers. $\Re \{\cdot \}$ returns
%the real part of its argument.
We will use $\Delta u$ to denote the $(N-1) \times 1$ row vector made
up of consecutive differences of $u$'s,
\begin{align*}\Delta u= &
\uu(1:N-1) -\uu(2:N) \\
=&\begin{bmatrix}
  u(1)-u(2) & \cdots & u(N-1)-u(N) \end{bmatrix}.
\end{align*}

%%%%%%%%%%%%%%%%%%%%%%%%%%%%%%%%%%%%%%%%%%%%%%%%%%%%%%%%%%%%%%%%%%%%%%%%%%%%%%%%
\section{Blind Identification using Lifting}

% If we define 
%\begin{equation}
%\Delta u= \begin{bmatrix}
 % \|u(1)-u(2)\|_2 & \cdots & \|u(N-1)-u(N)\|_2 \end{bmatrix}
%\end{equation}
We can formulate the problem of finding the input that changes most
infrequently and the ARX coefficients as the non-convex combinatorial problem
{\small \begin{subequations}\label{eq:probform}
\begin{align}\label{eq:probform1}
\min_{\begin{array}{cc}u(t),w(t)\,t=1,\dots,N,\\ \a_1,\dots,
\a_{n_a}, \b_1,\dots, \b_{n_b} \end{array}}  &\quad   \|\Delta u \|_0,
\\  \hspace{0cm} \label{eq:probform2}
  \subjto   \quad y(t)-\a_1 y(t-1) & %+\a_2 y(t-2) 
- \cdots - \a_{n_a} y(t-n_a) \\=
\b_1 u(t-n_k - 1) 
+\cdots  &+\b_{n_b} u(t-n_k - n_b)+w(t), \\  \quad |w(t)&|\leq \epsilon,\quad  t=n,\dots,N,
\end{align}\end{subequations}}
with the zero-norm counting the number of nonzero
elements of $\Delta u$. 
%Note that we in problem \eqref{eq:probform}
%seek the sparsest $\Delta u$. A natural set of questions are then: How can we
%be sure that the true $u$  corresponds to the sparsest $\Delta u$? Why
%could not the true solution correspond to the second sparsest
%$\Delta u$?. In other words, what is the motivation for seeking the
%sparsest $\Delta u$ satisfying the measurements? As it turns out, if the true
%$u$ is has few enough changes, then  \TODO{..}
Note that the combinatorial nature of the zero-norm alone
makes 
\eqref{eq:probform} difficult to solve. In addition    $\{
\a_k\}_{k=1}^{n_a}$, $\{ \b_k\}_{k=1}^{n_b}$,  $ \{
w(t)\}_{t=1}^N $   and  $ \{
u(t)\}_{t=1}^N $ are unknown, which makes even small
problems ($N$  small) difficult to solve.

Introduce 
$\X = \uu \bb^\T \in \Re ^ {N    \times n_b}.$ 
If we assume 
that  $\|\bb\|_2  \neq 0$, the objective of
\eqref{eq:probform} can  be written as \begin{align}\nonumber \|\Delta \u
  \|_0 =&\left \| \|\bb\|_2  %\begin{bmatrix}  |u(1) -u(2)|  \\ |u(2) -u(3)| \\ \vdots \\ |u(N-1) -u(N)|      % \end{bmatrix}
\left (\uu(1:N-1)-\uu(2:N) \right)
\right \|_0  \\=& \|\X(1:N-1,:)-\X(2:N,:)  
  \|_{2,0} \end{align}
%\end{equation} 
 %and
%\begin{equation} \Delta \X = %\begin{bmatrix} u(1)-u(2)\\
  %  u(2)-u(3) \\ \vdots \\ u(N-1)-u(N) \end{bmatrix} \begin{bmatrix} \b_1 &\b_2
   % &\cdots &b_{n_b} \end{bmatrix}   
%\left (\uu(1:N-1) -\uu(2:N) \right) \gg^\T \in \Re ^ {(N-1)
 %   \times n}.\end{equation}
 %and
%\begin{equation} \y = \begin{bmatrix} y(1) &
 %   y(2) & \hdots & y(N) \end{bmatrix}^\T \in \Re ^
 % {1\times N
  %  }.\end{equation}
Problem \eqref{eq:probform}  can  now be reformulated as
\begin{subequations}\label{eq:probformm}
\begin{align}\label{eq:probformm1}
\min_{\X, \ww,\aa,\bb}  \quad   \|& \X(1:N-1,:)-\X(2:N,:)  \|_{2,0}
\\  \label{eq:probformm2}
 \subjto \quad  y(t) &=
 \sum_{k_1=1}^{n_b} \X(t-n_k-k_1,k_1) \\ \quad &+\sum_{k_2=1}^{n_a} \a_{k_2} y(t-k_2)+w(t),\\
  \quad |w(t)|&\leq \epsilon,\quad t=n,\dots,N, \\ \quad  rank(\X) &=1.
\end{align}\end{subequations} 
This problem is equivalent with \eqref{eq:probform}  in the following
sense. Assume that \eqref{eq:probformm}, has a unique solution $\X^*$,
then  $\X^*$ must satisfy
$\X^*=\uu^* (\bb^*)^\T$, with $\uu^*$ and $\bb^*$ solving
\eqref{eq:probform}. Extracting the rank 1 component of $\X^*$, using
\eg singular value decomposition, we can hence decide both $\uu^*$ and
$\bb^*$  up to a multiplicative scalar (note that we can never do
better with the information at hand, not even if we would be able to
solve \eqref{eq:probform}). The estimate of $\aa$ will be
identical for both problems. 

The technique of introducing
the matrix $\X$ to avoid products between $\uu$ and $\bb$ is well
known in optimization and referred to as \textit{lifting} \cite{shor87,Lovász91,Nesterov98,Goemans:1995}.

Problem \eqref{eq:probformm} is   combinatorial and
nonconvex and therefore not  easier to solve than
\eqref{eq:probform}. To get an optimization problem we can solve, we
 relax the zero norm with the $\ell_1$-norm and remove the rank
constraint and instead minimize the rank. Since the rank of a matrix is
not a convex function, we replace the rank with a convex
heuristic. Here we choose the nuclear norm, but other heuristics are
also available (see for instance \cite{Fazel01arank}).  We then obtain the convex
 program   {\small
\begin{subequations}\label{eq:probform3}
\begin{align}
\min_{\X, \ww,\aa,\bb}  \quad   \|\X\|_*& + \lambda  \| \X(1:N-1,:) -\X(2:N,:)  \|_{2,1}
\\  \label{eq:probformm2}
 \subjto \quad  y(t) =&
 \sum_{k_1=1}^{n_b} \X(t-n_k-k_1,k_1) \\+&\sum_{k_2=1}^{n_a} \a_{k_2}
 y(t-k_2) + w(t), \\ \quad |w(t)|\leq& \epsilon,\quad
 t=n,\dots,N,
\end{align}\end{subequations} }
which we refer to as \textit{blind identification via
  lifting} (BIL) of ARX models with piecewise constant
input. $\lambda>0$ is a design parameter that roughly decides the
tradeoff between rank of $\X$ and the number of changes in
the input. Ideally, $\lambda$ is set to some large number and then
decreased until the solution $\X$ to BIL  becomes rank one.

\section{Analysis}
In this section, we highlight some theoretical results derived for BIL. 
The analysis follows that of CS, and is inspired by derivations given in
\cite{ohlsson:13,Candes:11,Candes:06,Chai:10,Donoho:06,Candes_2008,berinde:08,bruckstein:09,Candes:2010}. 

We need the following generalization of the RIP-property.   
\begin{df}[RIP]
We will say that a linear operator $\mathcal{A} : \Re^{n_1 \times n_2}
\rightarrow \Re^{n_3}$ is $(\varepsilon,k )-RIP$ if  
\begin{equation}\label{eq:RIP}
\left | \frac{\|\mathcal{A} (\Z)\|_2^2}{\| \Z(:)\|_2^2 } -1 \right |< \varepsilon
\end{equation}
for all $n_1 \times n_2$-matrices $\Z$ satisfying 
\begin{align}\label{eq:1} 0=&\|
\Z(1,:)- \Z(2,:)  \|_{2,0} \\ \label{eq:2}  0=&\|
\Z(n_1-1,:)- \Z(n_1,:)  \|_{2,0}  \\ \label{eq:3}  0<&\|
\Z(1:n_1-1,:)- \Z(2:n_1,:)  \|_{2,0} \leq k\end{align} and $\Z\neq 0$.  $\Z(:)$ is
here used to denote the vectorization of the matrix $\Z$.
\end{df}

We can now state the following theorem:
\begin{thm}[Uniqueness]\label{thm:one}
If $ \Z$ satisfies $\bb = 
 \mathcal{A}(  \Z)$, 
\begin{equation} 
0<\|
 \Z(1:n_1-1,:)-  \Z(2:n_1,:)  \|_{2,0} \leq k
\end{equation} and
$\mathcal{A}$ is $(\varepsilon,2 k )-RIP$ with $\varepsilon<1$ then there
exist no other solutions to   $ \bb =
 \mathcal{A}(  \Z)$ satisfying \eqref{eq:1}--\eqref{eq:3}.
%\begin{equation}0<\|
%\Z(1:n_1-1,:)- \Z(2:n_1,:)  \|_{2,0} \leq k.\end{equation}
\end{thm}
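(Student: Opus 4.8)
The plan is to argue by contradiction in the spirit of the standard compressed-sensing uniqueness proof, exploiting the linearity of $\mathcal{A}$ together with the $(\varepsilon,2k)$-RIP hypothesis. Suppose, contrary to the claim, that in addition to the given $\Z$ there is a second matrix $\hat{\Z}\neq\Z$ with $\bb=\mathcal{A}(\hat{\Z})$ that also satisfies \eqref{eq:1}--\eqref{eq:3}. I would study the difference $\mathbf{D}=\Z-\hat{\Z}\neq\0$ and show that it must simultaneously lie in the class of matrices governed by the $(\varepsilon,2k)$-RIP and be annihilated by $\mathcal{A}$, which \eqref{eq:RIP} forbids.

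First I would place $\mathbf{D}$ in the null space of $\mathcal{A}$: by linearity $\mathcal{A}(\mathbf{D})=\mathcal{A}(\Z)-\mathcal{A}(\hat{\Z})=\bb-\bb=\0$. Next I would verify that $\mathbf{D}$ inherits the boundary conditions \eqref{eq:1} and \eqref{eq:2}, since both $\Z$ and $\hat{\Z}$ have equal first two rows and equal last two rows, so the same holds for their difference. Then I would bound the number of row changes of $\mathbf{D}$ by a union argument: an index $i$ is a change point of $\mathbf{D}$ only if it is a change point of $\Z$ or of $\hat{\Z}$, because $\Z(i,:)=\Z(i+1,:)$ and $\hat{\Z}(i,:)=\hat{\Z}(i+1,:)$ together imply $\mathbf{D}(i,:)=\mathbf{D}(i+1,:)$. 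Since each change set has cardinality at most $k$ by \eqref{eq:3}, the change set of $\mathbf{D}$ has cardinality at most $2k$, i.e.
\begin{equation*}
\|\mathbf{D}(1:n_1-1,:)-\mathbf{D}(2:n_1,:)\|_{2,0}\leq 2k.
\end{equation*}

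Provided this change count is strictly positive, $\mathbf{D}$ satisfies \eqref{eq:1}--\eqref{eq:3} with $k$ replaced by $2k$ and $\mathbf{D}\neq\0$, so the $(\varepsilon,2k)$-RIP bound \eqref{eq:RIP} applies to $\mathbf{D}$. But $\mathcal{A}(\mathbf{D})=\0$ forces $\|\mathcal{A}(\mathbf{D})\|_2^2/\|\mathbf{D}(:)\|_2^2=0$, so that $|0-1|=1<\varepsilon$, contradicting $\varepsilon<1$. This contradiction rules out $\hat{\Z}$ and yields uniqueness.

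The step I expect to be the main obstacle is precisely the corner case left open above, in which $\mathbf{D}\neq\0$ yet has zero row changes, i.e. $\mathbf{D}$ is a nonzero matrix with all rows equal. Such a $\mathbf{D}$ satisfies \eqref{eq:1}--\eqref{eq:2} but violates the strict lower bound in \eqref{eq:3}, so it falls outside the RIP class and the argument above does not immediately apply. To close this gap I would argue separately that $\mathcal{A}$ cannot annihilate a nonzero constant-row matrix, either by invoking injectivity of $\mathcal{A}$ restricted to the subspace of constant-row matrices or by appealing to the ARX structure of $\mathcal{A}$ underlying the constraints \eqref{eq:probformm2}. If $\mathcal{A}(\mathbf{D})=\0$ with $\mathbf{D}$ constant-row forces $\mathbf{D}=\0$, this last case collapses and, combined with the RIP argument, forces $\Z=\hat{\Z}$, completing the proof.
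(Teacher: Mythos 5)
Your proof is, in its core, the same as the paper's: assume a second solution $\tilde\Z\neq\Z$, form the difference, note by linearity that it lies in the null space of $\mathcal{A}$, check that it satisfies \eqref{eq:1}--\eqref{eq:2} and has at most $2k$ row changes by the union bound, and conclude from \eqref{eq:RIP} that $1<\varepsilon$, a contradiction.

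The one point where you go beyond the paper is the corner case you flag at the end, and you are right to flag it. The paper's proof simply asserts $0<\|\tilde\Z(1:n_1-1,:)-\Z(1:n_1-1,:)-(\tilde\Z(2:n_1,:)-\Z(2:n_1,:))\|_{2,0}\leq 2k$ and never justifies the strict lower bound. If $\tilde\Z-\Z$ is a nonzero matrix with all rows equal, its change count is zero, it falls outside the class of matrices constrained by the $(\varepsilon,2k)$-RIP as defined, and the contradiction evaporates; moreover, such a $\tilde\Z$ would still satisfy \eqref{eq:1}--\eqref{eq:3} because a constant-row perturbation cancels in every consecutive row difference, so nothing in the theorem's hypotheses excludes it. Closing this gap requires exactly the kind of additional ingredient you sketch --- injectivity of $\mathcal{A}$ on the subspace of constant-row matrices, or a strengthening of the RIP definition to include that subspace --- and neither you nor the paper actually supplies it. In short: your proposal reproduces the paper's argument where the paper is complete, and is more candid than the paper about the one step that is genuinely missing.
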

\begin{proof}
Assume the contrary, \ie that there exist another solution $\tilde \Z
$
such that $\tilde \Z \neq  \Z$ and that satisfies
\eqref{eq:1}--\eqref{eq:3}. 
%\begin{equation}0<\|
%\tilde \Z(1:n_1-1,:)- \tilde \Z(2:n_1,:)  \|_{2,0} \leq k.\end{equation} 
 It is clear that \eqref{eq:1} and \eqref{eq:2} hold. In addition, \begin{align}0<\| \nonumber
\tilde \Z(1:n_1-1,:)- &\Z(1:n_1-1,:)  \\- (\tilde \Z(2:n_1,:)-& 
\Z(2:n_1,:) ))  \|_{2,0} \leq 2k.\end{align} Hence \eqref{eq:RIP} must hold for
$\tilde \Z- \Z$. But since $\mathcal{A}(\tilde \Z)=\mathcal{A}(\Z)= \bb$ we get from
\eqref{eq:RIP} that $1<\varepsilon$, which is a contradiction. We hence
have that $\Z$ is unique solution to   $ \bb =
 \mathcal{A}(  \Z)$ satisfying  \eqref{eq:1}--\eqref{eq:3}. 
%\begin{equation}0<\|
%\Z(1:n_1-1,:)- \Z(2:n_1,:)  \|_{2,0} \leq k.\end{equation}
\end{proof}

The following corollary now follows trivially.
\begin{cor}[Recoverability]\label{cor:first}
Let $\Z^*$ be the solution of 
\begin{equation}\label{eq:rel}
\begin{aligned}
\min_{\Z}  \quad   \|\Z\|_*& + \lambda  \| \Z(1:n_1-1,:) -\Z(2:n_1,:)  \|_{2,1}
\\  
 \subjto \quad  \bb =& \mathcal{A} (\Z).
\end{aligned} \end{equation} 
 If $\mathcal{ A}$
is $(\varepsilon,2 k )-RIP$ with $\varepsilon<1$, $\Z^*$ satisfies
\eqref{eq:1}--\eqref{eq:3}
%\begin{equation}0<\|
%\Z^*(1:n_1-1,:)- \Z^*(2:n_1,:)  \|_{2,0}  \leq k\end{equation} 
and $rank( \Z^*)=1$, then
$\Z^*$ is also the solution of \begin{equation}\label{eq:unrel}\begin{aligned}
\min_{\Z}  \quad &    \| \Z(1:n_1-1,:) -\Z(2:n_1,:)  \|_{2,0}
\\  
 \subjto \quad & \bb = \mathcal{A} (\Z),\quad rank(\Z)=1.
\end{aligned} \end{equation}
%The relaxed \label{eq:rel}
%problem   hence give the same $\Z$ as \eqref{eq:unrel}.
\end{cor}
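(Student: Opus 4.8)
The plan is to reduce the corollary entirely to the uniqueness furnished by Theorem~\ref{thm:one}, which is why it is billed as trivial. First I would observe that $\Z^*$ is \emph{feasible} for \eqref{eq:unrel}: by hypothesis it satisfies the linear constraint $\bb=\mathcal{A}(\Z^*)$ and $rank(\Z^*)=1$. Hence the only thing left is to show that $\Z^*$ attains the minimal value of the $\|\cdot\|_{2,0}$ objective over the feasible set. Write $k^*:=\|\Z^*(1:n_1-1,:)-\Z^*(2:n_1,:)\|_{2,0}$; by \eqref{eq:3} we have $0<k^*\le k$.

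The mechanism I would use is the RIP difference argument already exploited in the proof of Theorem~\ref{thm:one}. Suppose, for contradiction, that some $\tilde\Z\neq\Z^*$ is feasible for \eqref{eq:unrel} with objective at most $k^*$. Consider the difference $\Z^*-\tilde\Z$. If $\tilde\Z$ inherits the boundary equalities \eqref{eq:1} and \eqref{eq:2}, then so does $\Z^*-\tilde\Z$, while its row differences have at most $k^*+k^*\le 2k$ nonzero blocks, so $\Z^*-\tilde\Z$ satisfies \eqref{eq:1}--\eqref{eq:3} with the sparsity bound $2k$. Since $\Z^*-\tilde\Z\neq 0$, the $(\varepsilon,2k)$-RIP condition \eqref{eq:RIP} with $\varepsilon<1$ forces $\|\mathcal{A}(\Z^*-\tilde\Z)\|_2>0$, contradicting $\mathcal{A}(\Z^*)=\mathcal{A}(\tilde\Z)=\bb$. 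This is exactly the uniqueness of Theorem~\ref{thm:one}: among matrices obeying $\bb=\mathcal{A}(\Z)$ together with \eqref{eq:1}--\eqref{eq:3}, $\Z^*$ is the only one, so the feasible set, restricted to the structurally admissible class, collapses to the singleton $\{\Z^*\}$ and a minimization over a singleton is won by its only member.

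The step I expect to be the main obstacle — and where the word ``trivially'' hides something — is ruling out competitors that undercut $\Z^*$ \emph{precisely because} they violate \eqref{eq:1}--\eqref{eq:3}. The degenerate all-constant matrix (objective $0$) is actually harmless: its row differences vanish, so it still satisfies \eqref{eq:1} and \eqref{eq:2}, its difference with $\Z^*$ has exactly $k^*\in(0,2k]$ nonzero row-difference blocks, and the same RIP computation kills it. The genuinely delicate case is a feasible $\tilde\Z$ that breaks a boundary equality \eqref{eq:1} or \eqref{eq:2}, because then $\Z^*-\tilde\Z$ no longer satisfies those conditions and the RIP bound \eqref{eq:RIP} does not apply. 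To close the argument cleanly I would read \eqref{eq:unrel} as posed over the class of $\Z$ satisfying \eqref{eq:1}--\eqref{eq:3} — the structural model (constant input at the endpoints, finitely many changes) from which the lifting was built — so that Theorem~\ref{thm:one} immediately renders this class a singleton and the optimality of $\Z^*$ is automatic; the strict positivity baked into \eqref{eq:3} is exactly the hypothesis guaranteeing that $\Z^*$ is the relevant minimizer among non-trivial structured solutions.
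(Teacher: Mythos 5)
Your proof takes essentially the same route as the paper's, whose entire argument is the single line ``The corollary follows directly from Theorem~\ref{thm:one}'': feasibility of $\Z^*$ for \eqref{eq:unrel} combined with the uniqueness over the class \eqref{eq:1}--\eqref{eq:3} supplied by Theorem~\ref{thm:one}. Your closing observation --- that a feasible competitor for \eqref{eq:unrel} violating the boundary conditions \eqref{eq:1}--\eqref{eq:2} or the sparsity window \eqref{eq:3} is not excluded by the literal feasible set, so that \eqref{eq:unrel} must be read as posed over the structured class --- is a genuine caveat that the paper's one-line proof silently elides, and your treatment is the more careful of the two.
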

\begin{proof}The corollary follows directly from
  Theorem~\ref{thm:one}.\end{proof}
It is easy to see that \eqref{eq:probform3} has the same form as
\eqref{eq:rel} and  \eqref{eq:unrel} as
\eqref{eq:probformm}. Corollary \ref{cor:first} hence provides
necessary conditions for when the relaxation, going from
\eqref{eq:probformm} to  \eqref{eq:probform3}, is tight. 

\section{Solution algorithms and software}

Many standard methods of convex optimization can be used to solve the problem
\eqref{eq:probform3}.
Systems such as CVX \cite{cvx1,cvx2} or YALMIP \cite{Yalmip}
can readily handle the nuclear norm and the sum-of-norms
regularization. For large scale problems, the \textit{alternating direction
method of multipliers} (ADMM, see \eg \cite{bert:97,boyd:11}) is an attractive choice and we have
previously shown that ADMM can be very efficient on similar
problems \cite{ohlsson:13}. Code for solving \eqref{eq:probform3} will
be made
available on \url{http://www.rt.isy.liu.se/~ohlsson/code.html}

\section{Numerical Illustrations}
\subsection{A Simple Noise Free FIR  Example}\label{ex:first}
In this example, given $\{y(t)\}_{t=1}^{30}$ and $n_a=0,n_b=3$, we illustrate
the ability to recover the FIR model used to generate
$\{y(t)\}_{t=1}^{30}$  and the correct piecewise constant input
$\{u(t)\}_{t=1}^{30}$ (up to a multiplicative scalar). The given $y$ is shown in Figure
\ref{fig:output} and the input that was used to generate $y$ in Figure
\ref{fig:input}. The true $\bb$ was $\begin{bmatrix} -7.4111  &
  -5.0782   &-3.2058 \end{bmatrix}$.  
\begin{figure}[h!]\centering
\includegraphics[width=0.9\columnwidth]{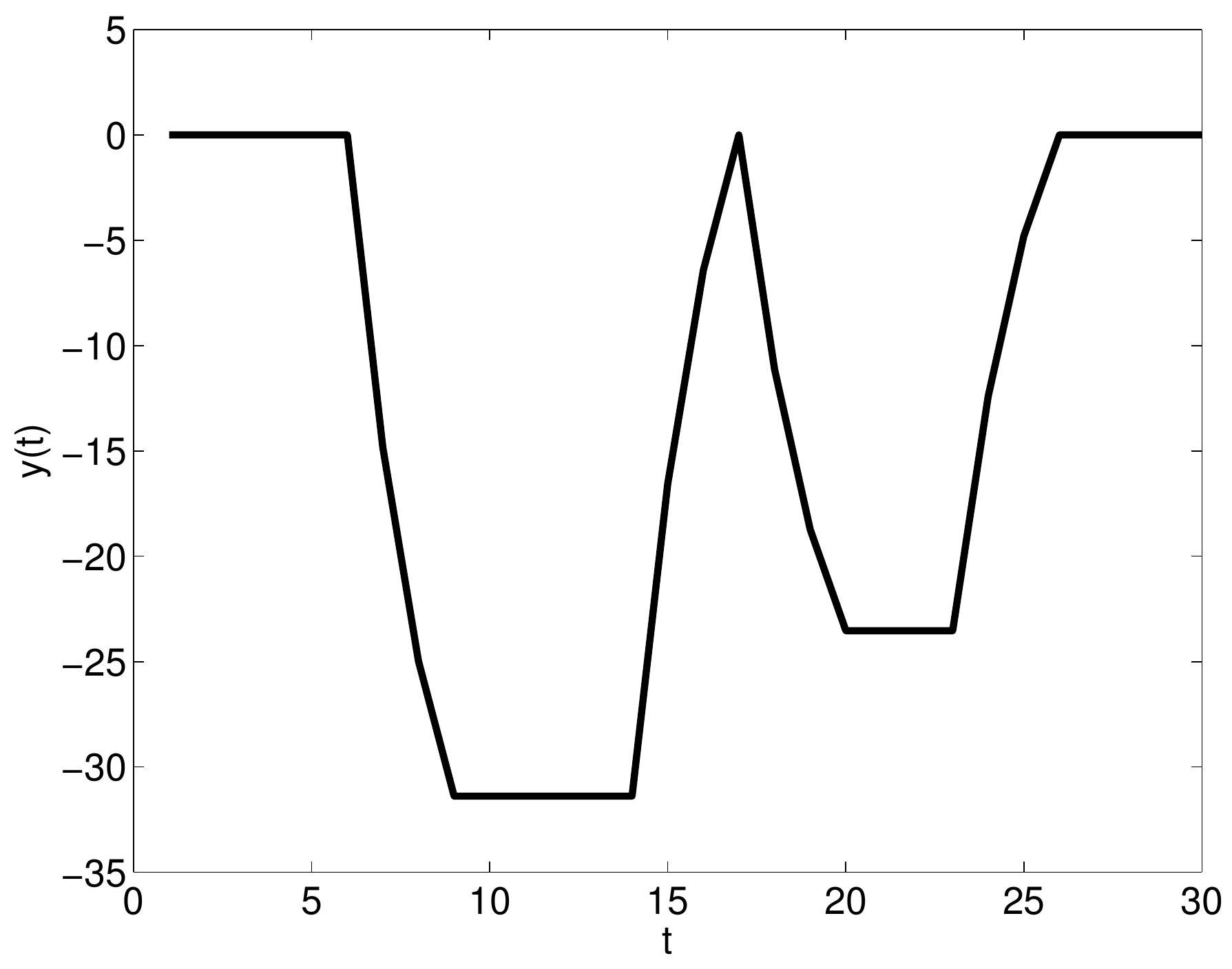}
\caption{The true output of the FIR model identified in Section \ref{ex:first}.}\label{fig:output}
\end{figure}

\begin{figure}[h!]\centering
\includegraphics[width=0.9\columnwidth]{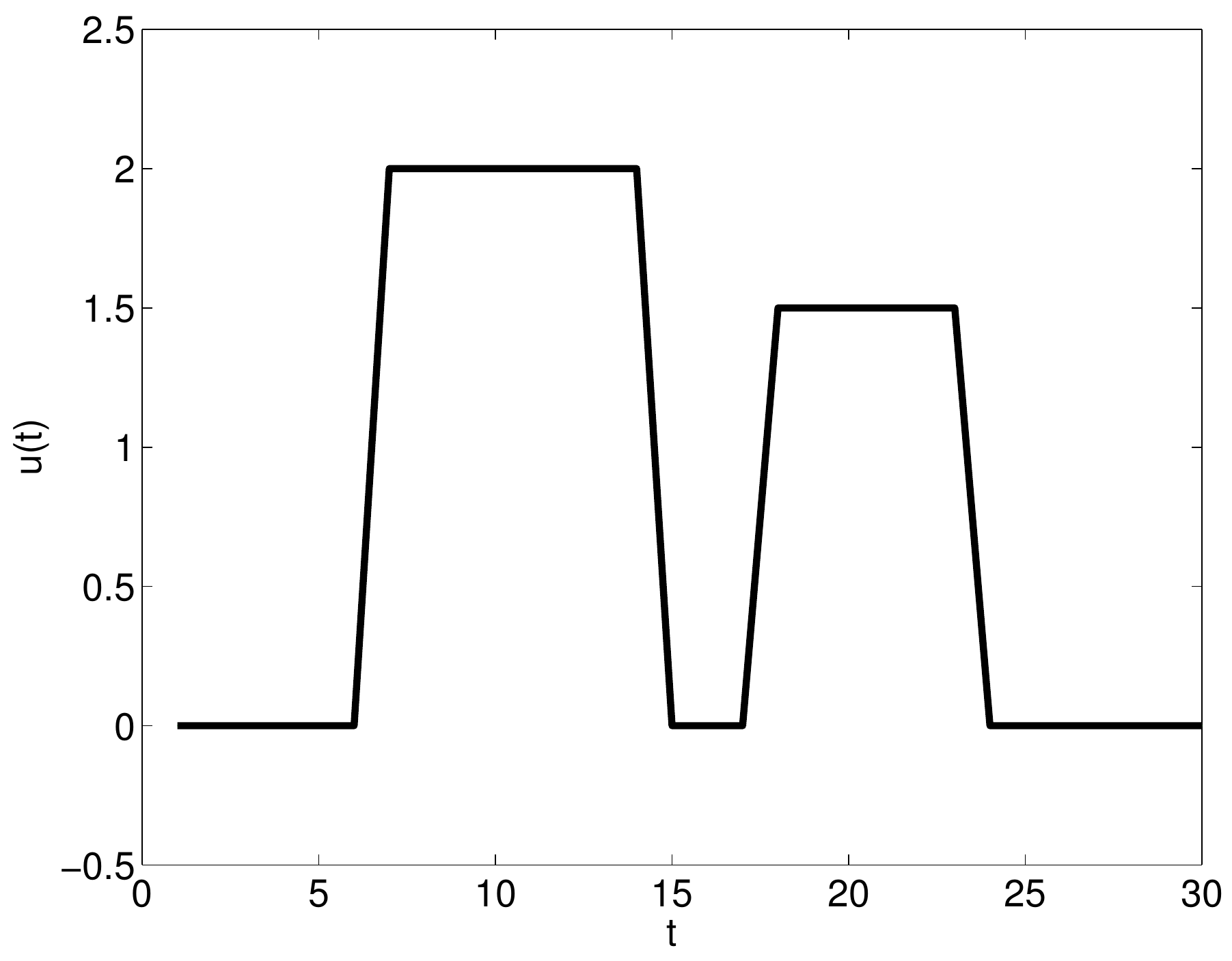}
\caption{The true input of the FIR model identified in Section \ref{ex:first}.}\label{fig:input}
\end{figure}

To recover $\{u(t)\}_{t=1}^{30}$  and $\bb$ we use BIL. $\epsilon$ was
set to $0$ and $\lambda$
was increased until  the first singular value was significantly
larger than the second singular value. $\lambda=10^4$ gave a first
singular value of 64.3 and a second singular value of $9.8 \times
10^{-6}$. The estimated input can for this $\lambda$  not be distinguished from the true
and the estimate for $\bb$ is equal to the true $\bb$ up the to the numerical precision of
the solver after rescaling. 

On this simple example, a method that first estimates the input and
then the FIR coefficients (for instance \cite{abed1997:jk,
  gesbert1997:lk, Makhoul:1975:lj}) works pretty well. In particular,
the na\"{i}ve approach of first estimating
a piecewise input by fitting a piecewise constant signal  to the output
measurements (use \eg \cite{Kimetal:09,OhlssonLB:10}) and 
 secondly estimate the FIR coefficients gave an as good result as
 BIL. 

\subsection{Identifying an ARX Model From Noisy Data}
In this example we use the same input as in the previous example but
modify the system to be
\begin{align}\nonumber
z(t)=&0.2 z(t-1)  -4.9594 u(t-1)  \\+& 6.1774 u(t-2) +   3.3930 u(t-3).
\end{align}
We also assume that there is a uniform measurement noise between $-2$
and $2$ added to the
output,
\begin{equation}
y(t)=z(t)+e(t),\quad e(t) \sim U(-2,2).
\end{equation}
Given $\{y(t)\}_{t=1}^{30}$ we now aim to find a model of the form
\begin{equation}
z(t)=a_1 z(t-1)  +b_1 u(t-1)  + b_2 u(t-2) +   b_3 u(t-3),
\end{equation}
and a piecewise constant input $\{u(t)\}_{t=1}^{30}$. The given output
sequence $\{y(t)\}_{t=1}^{30}$ is shown in Figure~\ref{fig:outputf}.  
\begin{figure}[h!]\centering
\includegraphics[width=0.9\columnwidth]{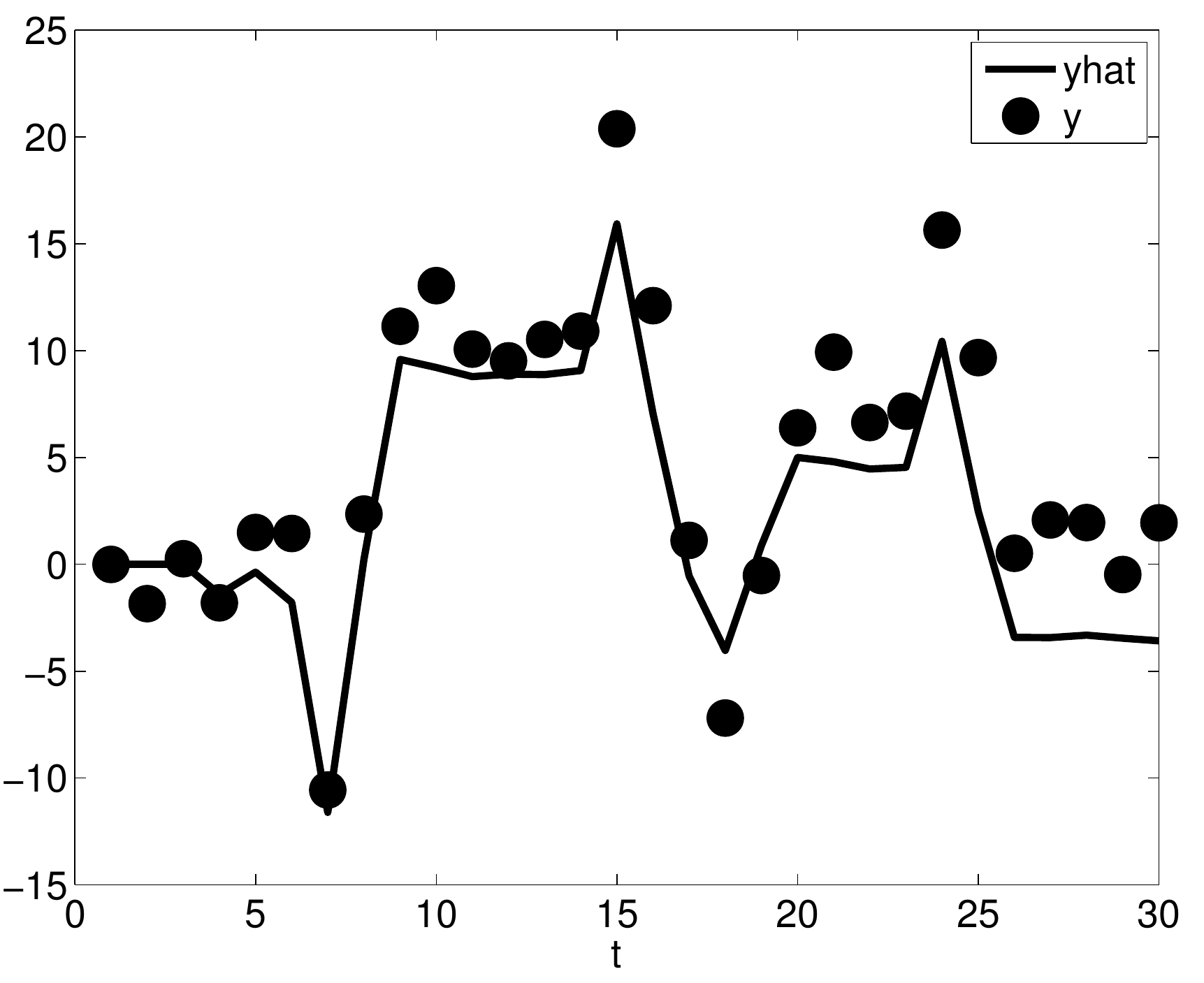}
\caption{The output measurements $\{y(t)\}_{t=1}^{30}$ given as
  filled circles and the estimated simulated output obtained by
  feeding the estimated ARX model with the estimated inputs depicted
  using solid line.}\label{fig:outputf}
\end{figure} 

If we apply
BIL with $\lambda=10^7$ and $\epsilon =2$ the input shown with solid
line in Figure~\ref{fig:inputf}
is found. The input associated with the second largest singular value
is also shown (gray thin line).  The
two largest singular values were 43  and 15. Figure~\ref{fig:inputf}
also shows the true input with dashed line. Figure~\ref{fig:outputf}
shows the output generated by driving the estimated ARX model with the input
estimate (both corresponding to the largest singular value of $\X$).
\begin{figure}[h!]\centering
\includegraphics[width=0.9\columnwidth]{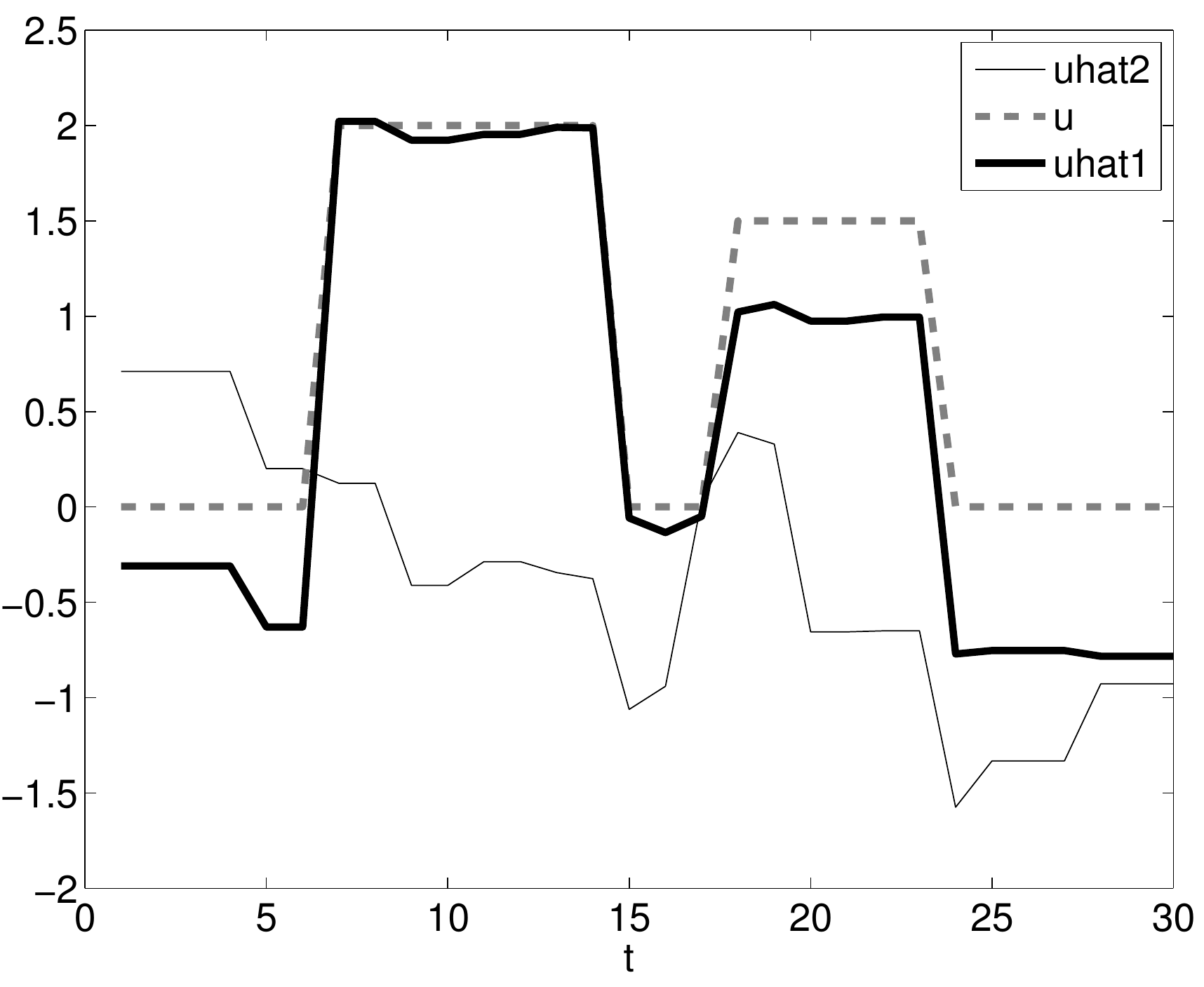}
\caption{The true input shown using dashed line, the estimate of the
  input associated with the first singular value of $\X$ with solid
  and the estimate associated with the second largest singular value
  shown with thin gray line.}\label{fig:inputf}
\end{figure}

As in Lasso \cite{Tibsharami:96} and my other $\ell_1$-regularization
problems, it is useful with a refinement step to remove bias. Simply
set $\lambda=0$ in BIL and add the constraint
\begin{equation}
\Delta \uu(i)= 0 \quad \text{ if }\quad  |\Delta \uu^*(i)| \leq \gamma,\quad
i=1,\dots, N-1,
\end{equation}
where $ \Delta \uu^* $ is the previous estimate of $\Delta \uu$ and
$\gamma \geq 0$. If we chose $\gamma=0.5$ the input shown in
Figure~\ref{fig:inputff} is the result. 
\begin{figure}[h!]\centering
\includegraphics[width=0.9\columnwidth]{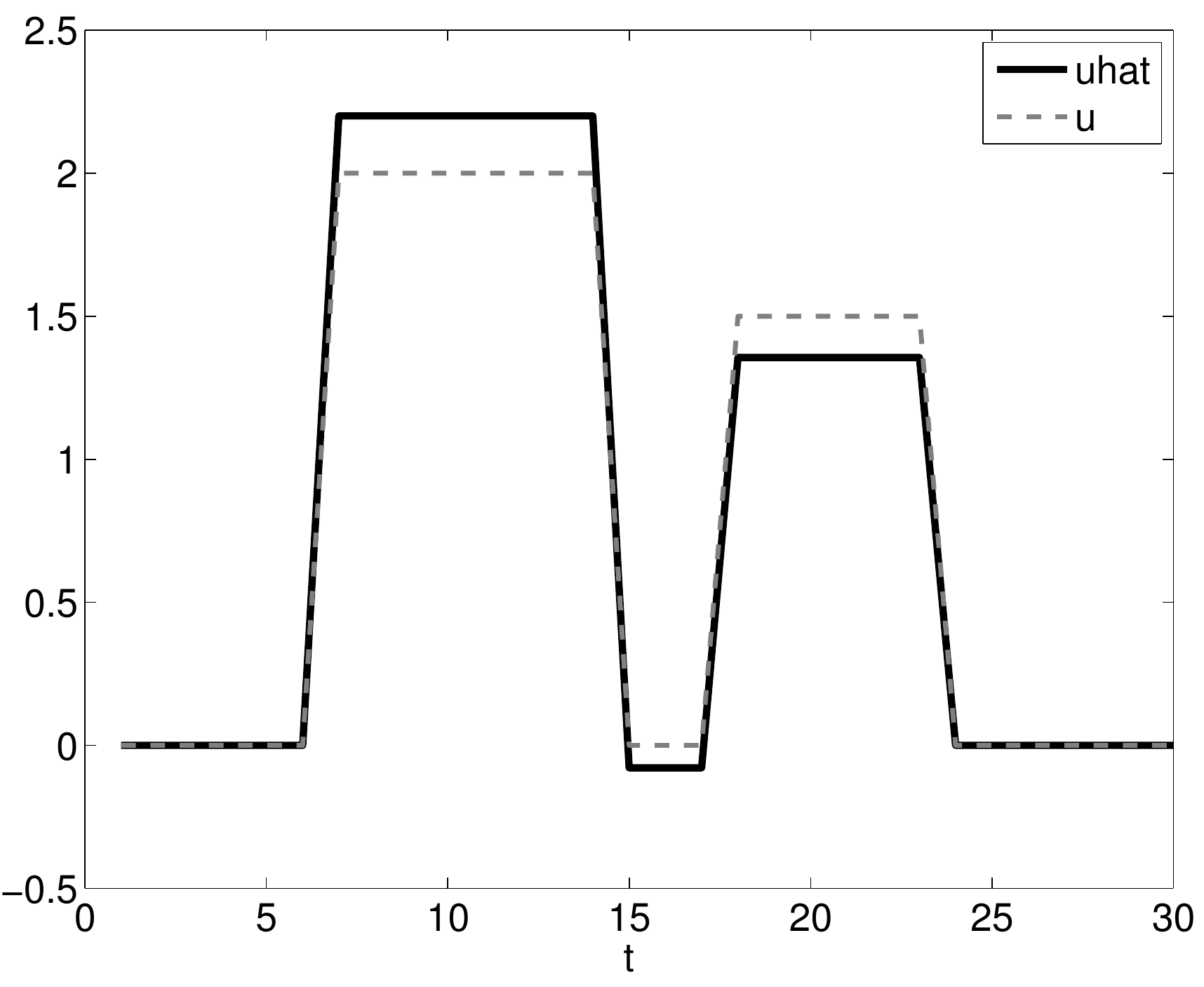}
\caption{The true input shown with dashed line and the final estimate
of the input, after the refinement step, shown with solid line. }\label{fig:inputff}
\end{figure}
The corresponding output obtained by driving the estimated ARX model
with the estimated input (both corresponding to the largest singular
value of $\X$ after the refinement step) is given
in  Figure~\ref{fig:outputff}. The
two first singular values were now 44 and 5. The estimate for $a$
was 0.2 and  $\hat b_1=-4.5594$, $\hat b_2=   5.7741$, and $\hat b_3 =
    4.0817$.

\begin{figure}[h!]\centering
\includegraphics[width=0.9\columnwidth]{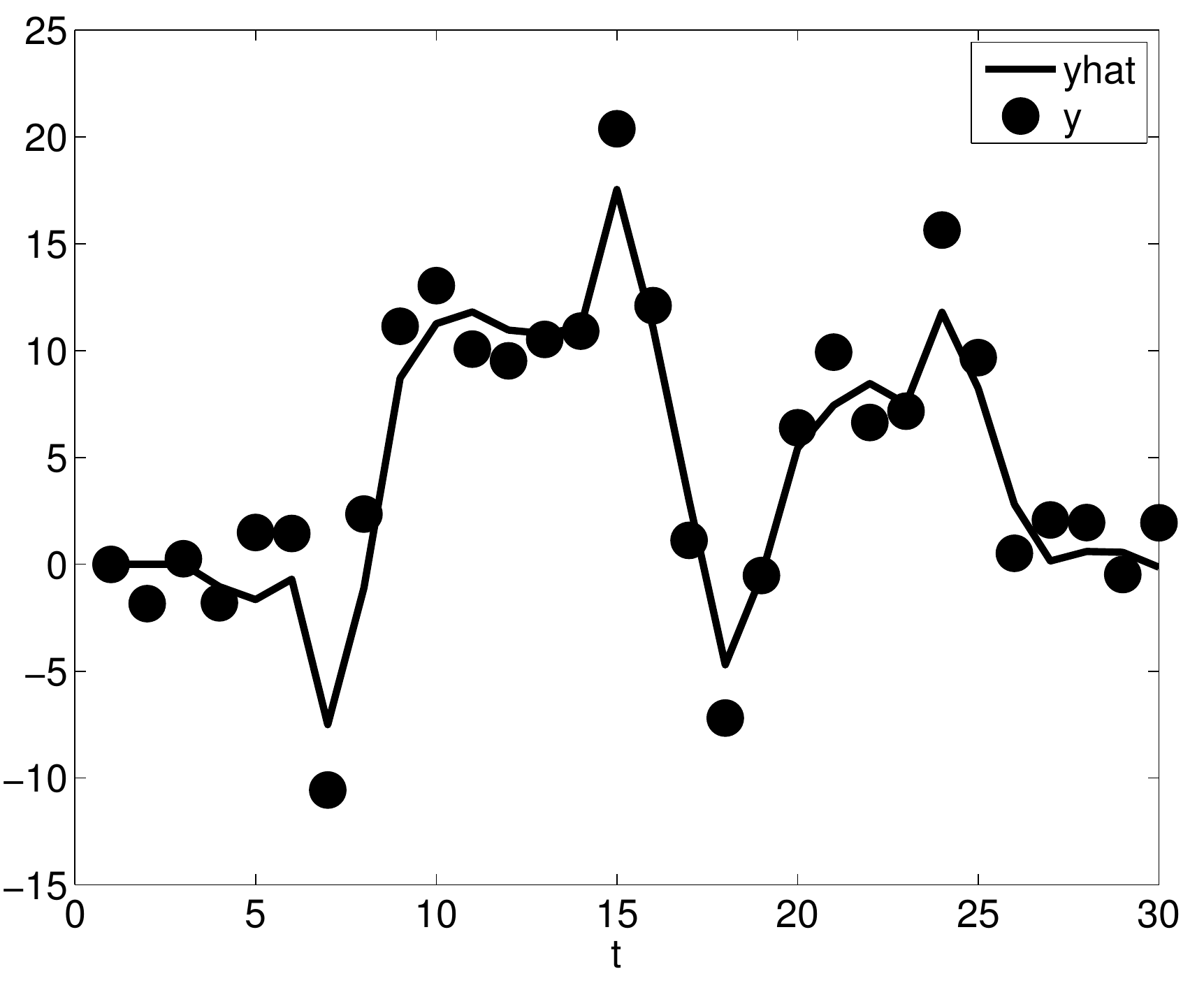}
\caption{The generated output obtained by feeding the estimated
  refined ARX model with the refined output estimates shown with solid
line. The measured noisy outputs shown with solid circles.}\label{fig:outputff}
\end{figure}

On this more challenging example, the na\"{i}ve method of first estimating
a piecewise input  and 
 secondly estimate the ARX coefficients did not give a satisfying
 result. Figure \ref{fig:altesty} shows the result of fitting a
 piecewise constant signal to the outputs and Figure \ref{fig:altestu}
 shows compares the true input with the estimated input for the na\"{i}ve method.

\begin{figure}[h!]\centering
\includegraphics[width=0.9\columnwidth]{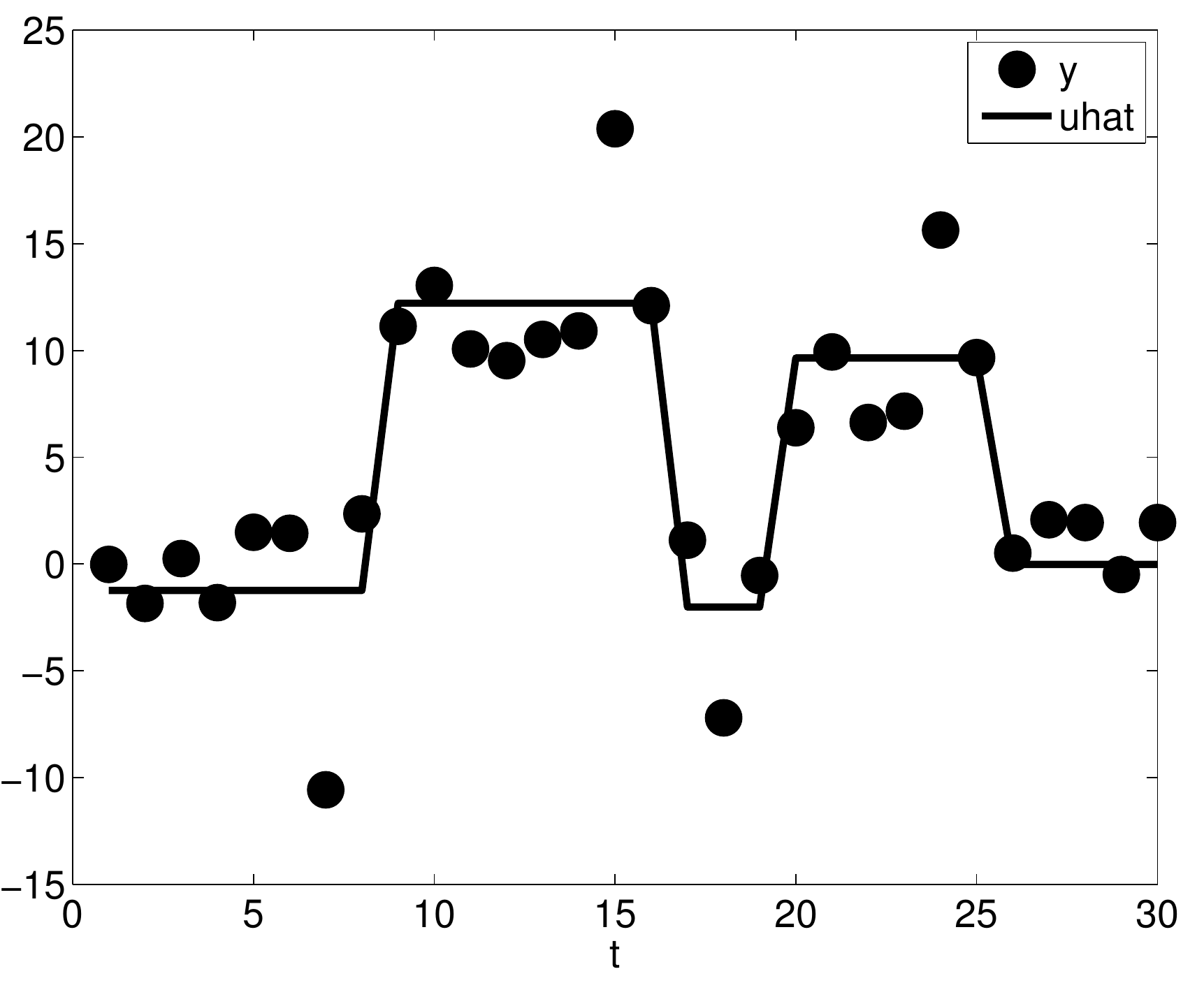}
\caption{A piecewise constant signal fitted to the measured output.}\label{fig:altesty}
\end{figure}
\begin{figure}[h!]\centering
\includegraphics[width=0.9\columnwidth]{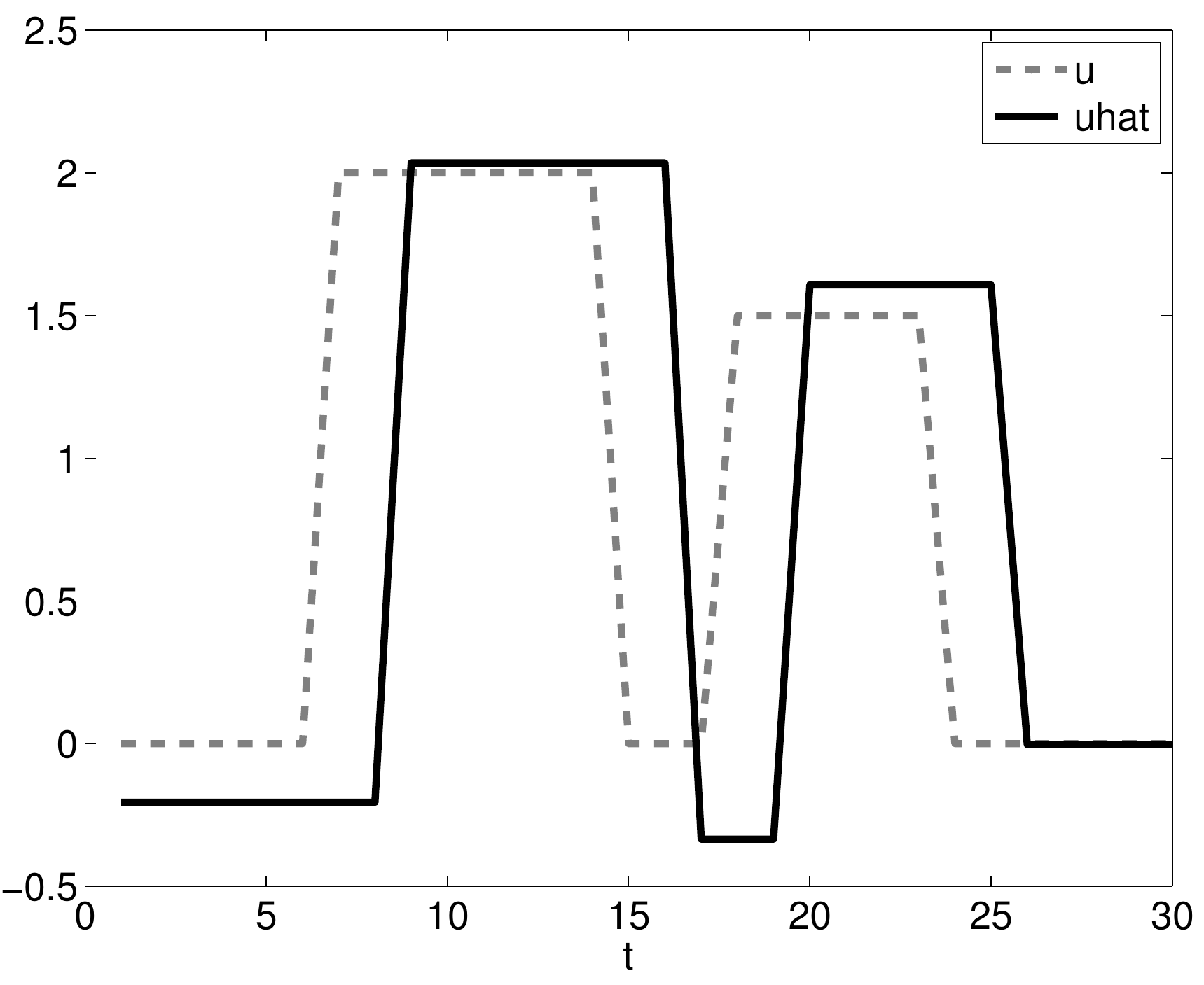}
\caption{The  true input (dashed line) and the estimated input (solid
  line) by  the na\"{i}ve method.}\label{fig:altestu}
\end{figure}

\subsection{A Real Data Example}
This example is motivated by energy disaggregation. The problem of
disaggregation refers to the problem of decomposing an aggregated
signal into its sources. As an example, the aggregated signal could be
the total energy consumed by a house. The sources would then be the
energy consumed by different appliances, \eg the toaster, HVAC,
dishwasher etc. In  \cite{Ohlssonetal:13e}, we present a disaggregation algorithm which utilizes models for individual appliances.
 To model different appliances, the power of individual
appliances was measured as they were turned on and off. Figures
\ref{fig:toaster1} and \ref{fig:toaster2} show the measured power of a
toaster as it was turned on at two different times. To estimate a model for the toaster, we need to
estimate both the input and the model at the same time. In addition,
we do not want to assume that the input is binary since many
appliances have settings that may have changed from one time to the
next, \eg the temperature setting of a toaster etc. We make the
assumption that a change in \eg the temperature of the toaster can be
modeled by different input amplitudes. It is
therefore more natural to assume that the input is piecewise constant
rather than binary.  

We chose to use $n_a=n_b=8$,  $n_k=0$ $\epsilon=0.04$ and $\lambda=10^8$. We subtracted
the total mean of both power measurements and sought  two input
sequences and a set of parameters that well approximate the two power
measurement sequences.  
This resulted in the input estimates  shown in Figures \ref{fig:toaster1u}
and \ref{fig:toaster2u}. 
\begin{figure}[h!]\centering
\includegraphics[width=0.9\columnwidth]{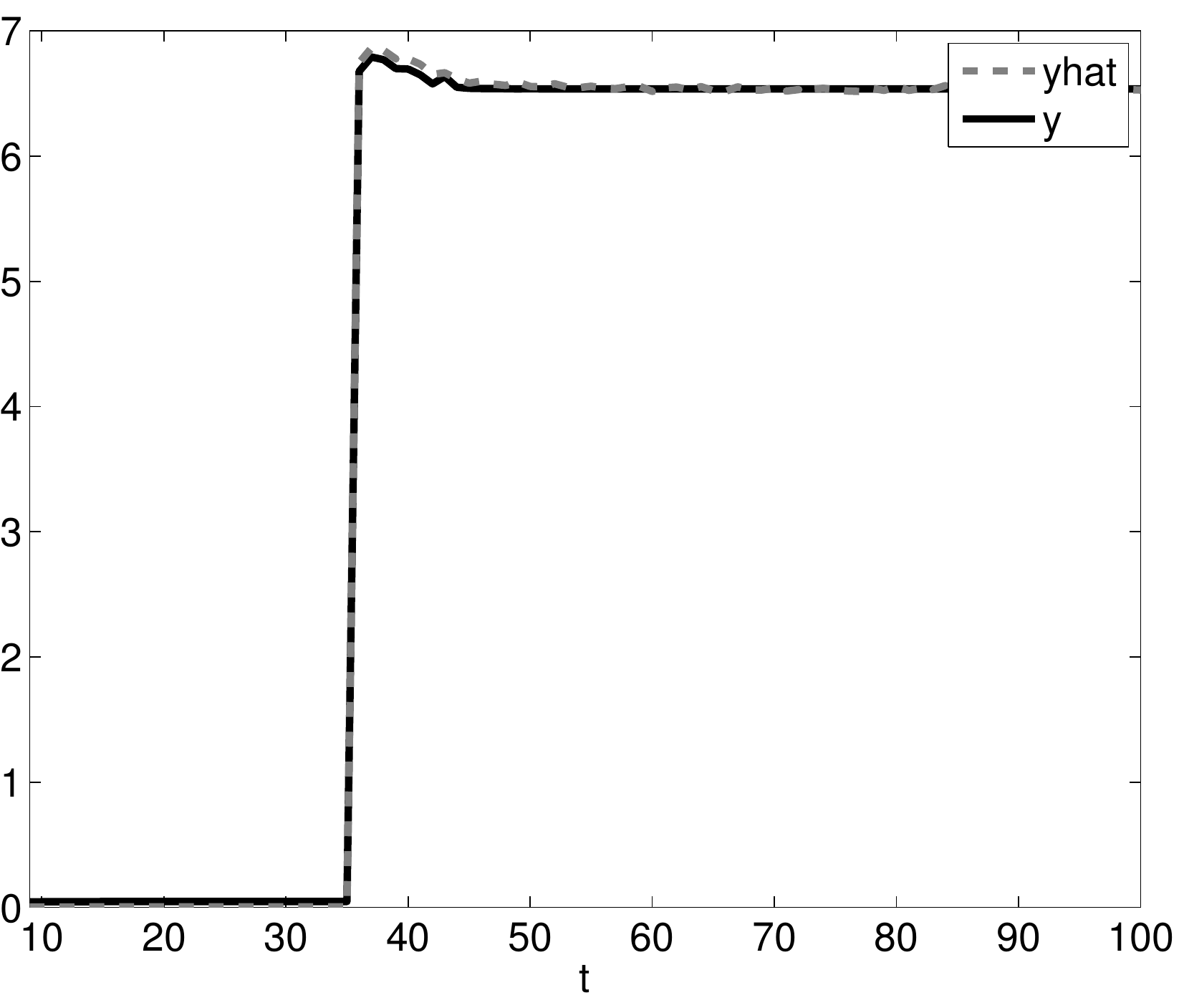}
\caption{Measured and estimated power consumption of a toaster.}\label{fig:toaster1}
\end{figure}

\begin{figure}[h!]\centering
\includegraphics[width=0.9\columnwidth]{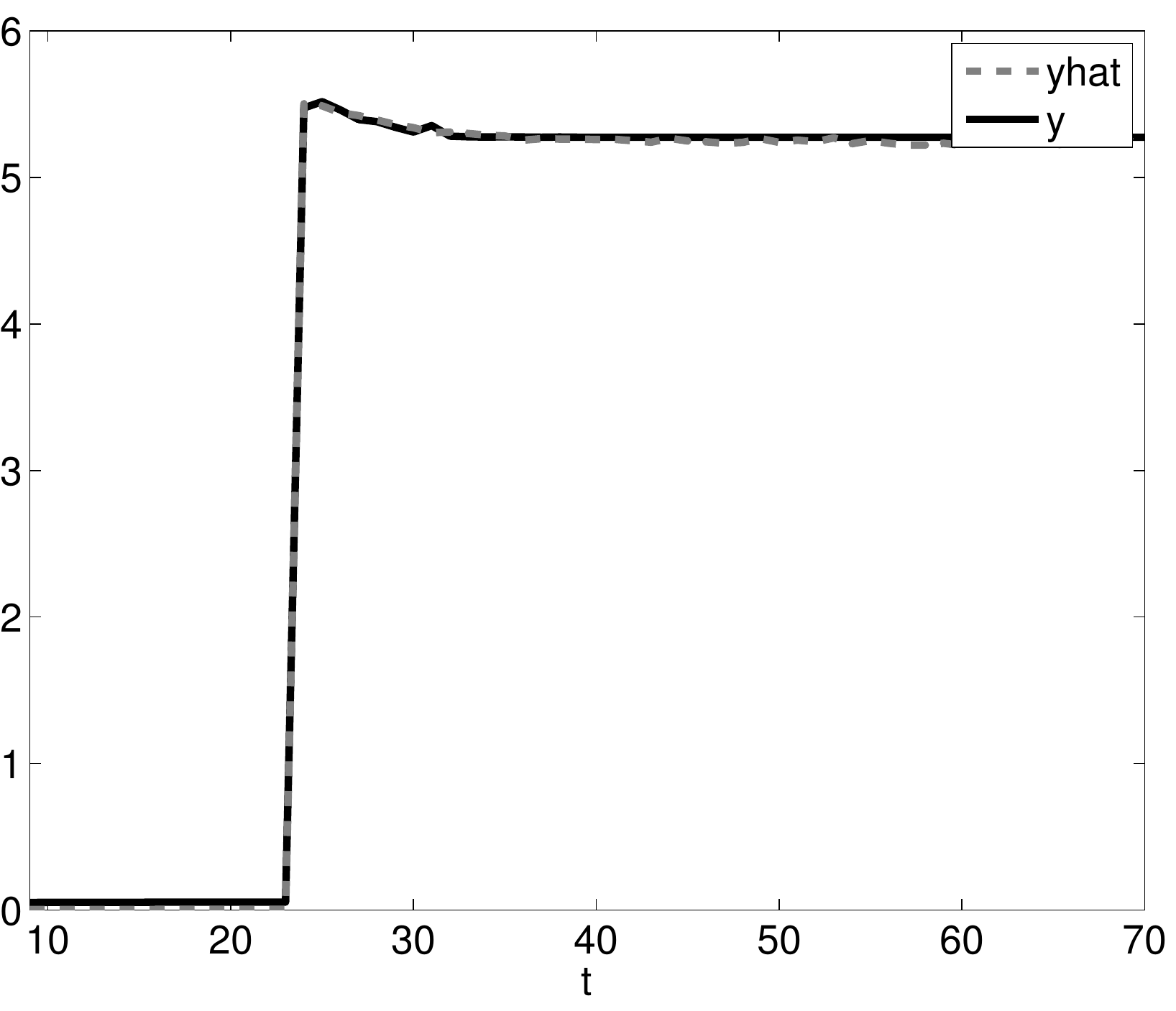}
\caption{Measured and estimated power consumption of a toaster.}\label{fig:toaster2}
\end{figure}
 The ARX parameter were computed to:
\begin{align}
\aa=  \begin{bmatrix}0.0191\\    0.0004\\   -0.0006\\    0.0098\\    0.0053\\    0.0065\\   0.0231\\   -0.0135\end{bmatrix},\quad
 \bb=\begin{bmatrix} 4.6219\\   -0.0527\\   -0.0527\\   -0.0527\\   -0.0567\\   -0.0567\\   -0.0567\\   -0.0683\end{bmatrix}.
\end{align}
Simulating the model provides the power estimates also  shown in Figures
\ref{fig:toaster1} and \ref{fig:toaster2}. The two largest eigenvalues
were 32 and 0.07. The found solution is hence very closet to being a
rank 1 matrix.

Given aggregated power measurements, we can now use
the model of the toaster and seek the a piecewise constant signal
representing the toaster being turned on and off. Since it is the
power consumption of different devices that are of interest in disaggregation, it is
not a problem that we can not identify the inputs or the ARX
parameters more than up to a multiplicative constant.  %We refer the
%reader interested in applying BIL for disaggregation to \cite{Ohlssonetal:13e}.  

\begin{figure}[h!]\centering
\includegraphics[width=0.9\columnwidth]{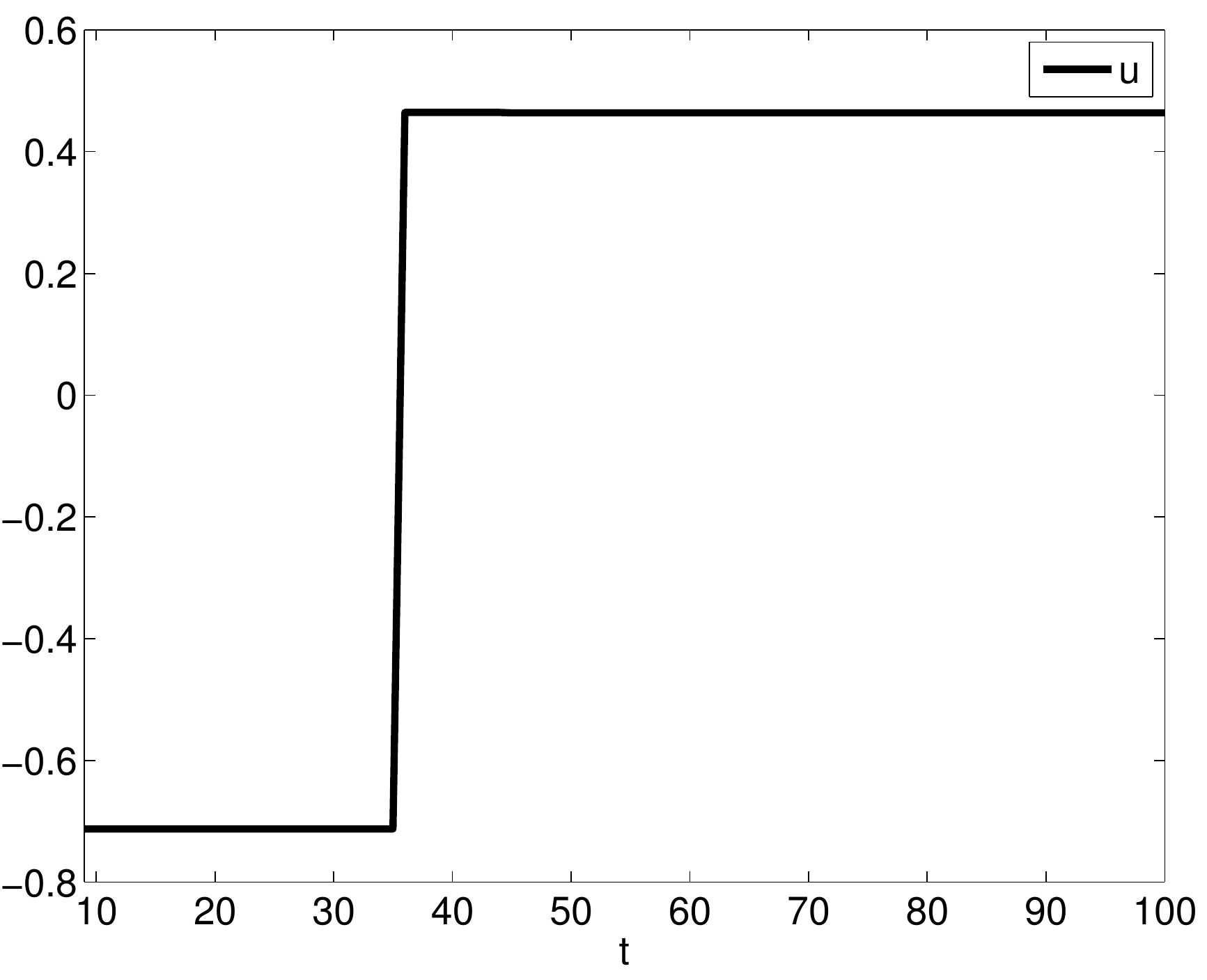}
\caption{Estimated piecewise constant input to the toaster power
  measurements seen in Figure \ref{fig:toaster1}.}\label{fig:toaster1u}
\end{figure}

\begin{figure}[h!]\centering
\includegraphics[width=0.9\columnwidth]{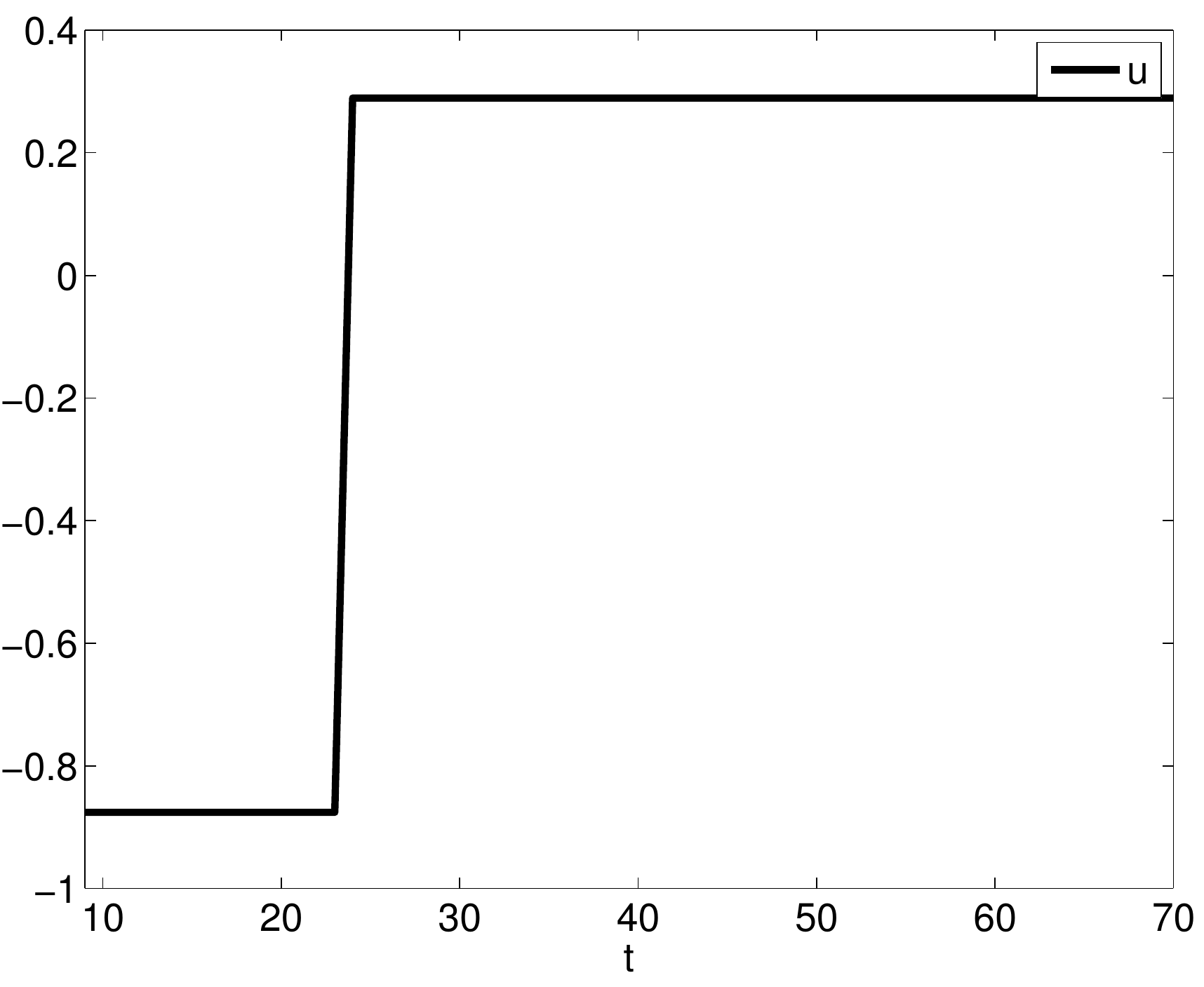}
\caption{Estimated piecewise constant input to the toaster power
  measurements seen in Figure \ref{fig:toaster2}.}\label{fig:toaster2u}
\end{figure}

\section{Conclusion}

This paper presented a novel framework for BSI of ARX model with piecewise
constant inputs. The framework uses the fact that the problem can be
rewritten as a rank minimization problem. A convex relaxation is
presented to approximate the sought ARX parameters and the unknown
inputs.

%It is common practice in sparse estimation to enhance  sparsity by
%an iterative re-weighting procedure, see for instance
%\cite{candes08}. 

%%%%%%%%%%%%%%%%%%%%%%%%%%%%%%%%%%%%%%%%%%%%%%%%%%%%%%%%%%%%%%%%%%%%%%%%%%%%%%%%
%\section{ACKNOWLEDGMENTS}

%%%%%%%%%%%%%%%%%%%%%%%%%%%%%%%%%%%%%%%%%%%%%%%%%%%%%%%%%%%%%%%%%%%%%%%%%%%%%%%%
\bibliographystyle{plain}
\bibliography{refHO,lily_refs}
\end{document}